\documentclass[11pt,a4paper]{article}

\usepackage[T1]{fontenc}
\usepackage[utf8]{inputenc}
\usepackage{lmodern} %
\usepackage[a4paper,margin=2.5cm]{geometry} %
\usepackage{amsmath,amssymb,amsthm}
\usepackage{booktabs}
\usepackage{graphicx}
\usepackage{float}
\usepackage{tikz}
\usetikzlibrary{patterns,arrows.meta}
\usepackage[round]{natbib}
\usepackage[hidelinks]{hyperref}

\newtheorem{proposition}{Proposition}
\newtheorem{corollary}{Corollary}
\newtheorem{definition}{Definition}

\newcommand{\rga}{\mathrm{RGA}}
\newcommand{\rge}{\mathrm{RGE}}
\newcommand{\rgr}{\mathrm{RGR}}
\newcommand{\aurga}{\mathrm{AURGA}}
\newcommand{\aurge}{\mathrm{AURGE}}
\newcommand{\aurgr}{\mathrm{AURGR}}
\newcommand{\Sm}{S} %
\newcommand{\thetahat}{\hat\theta}
\newcommand{\Sighat}{\widehat\Sigma} %

\title{Confidence regions for SAFE AI compliance scores}
\author{Anton Sokolov\thanks{Tyche Institute, Tallinn, Estonia.
    \texttt{anton.sokolov@tyche.institute}}
  \and Paolo Giudici \thanks{Department of Economics, University of Pavia, Pavia, Italy.}
  \and Vasily Kolesnikov \thanks{Department of Computer Engineering, University of Pavia, Pavia, Italy.}}
\date{August 2026}

\begin{document}
\maketitle

\begin{abstract}
Artificial Intelligence trustworthiness scores are moving from research dashboards into compliance claims and procurement decisions. A compliance claim is a statistical decision based on the comparison between an estimated compliance metric and a set threshold.
The recently proposed integrated SAFE AI metrics are based on three basic rank-graduation  measures of accuracy, explainability and robustness, expressed on a common footing.  The metrics are integrated
into a single compliance score, so far reported as a point value. 

In this paper we propose to add an uncertainty layer to the SAFE AI compliance score. All component metrics are estimated on a
shared test sample, so their errors covary. We estimate the full covariance of the component vector with a paired bootstrap, and quantify departures from the source-wise independence baseline
suggested by the integrated-metrics framework's variability decomposition. A closed-form identity splits such departure into an uncertainty-weighted effective dimension and an exposure-weighted data-driven correlation, with both depending on the aggregator and the estimated uncertainty of the component metrics.

The application of our proposal to both real and simulated data shows that the assumption of independence can either narrow or widen confidence bands, while the direction and magnitude of the effect depend on the aggregator type, machine learning model, and perturbation family. The resulting confidence region
supports compliance decisions, which can be documented in a formal  uncertainty certificate.

\end{abstract}

\section{Introduction}\label{sec:intro}

The increasing applications of Artificial Intelligence require the development of scoring models that can assess their trustworthiness, and decide whether they are compliant to regulations and codes of practices, such as the European AI Act, comparing the obtained score with a set threshold. Some recent papers deal with this problem, and can be categorised in two main streams.

On one hand, reviews of trust, risk and
security management for agentic systems built on large language models
catalogue the dimensions to measure and the controls that use them
\citep{raza2026trism}.  Some composite trust
indices use a one-dimensional percentile bootstrap on the scalar composite
\citep{ahadian2026haiti} to quantify uncertainty on the measurement. 
While these approaches take uncertainty into account, 
 their output covers the scalar alone and omits the
cross-component covariance and a joint statement about the component vector.
Related methods
provide simultaneous inference for fairness auditing across subgroups
\citep{cherian2023} and propagate uncertainty from composite-indicator weights
and construction choices \citep{saisana2005}. The covariance between
heterogeneous trust metrics estimated on a shared test sample remains
unaddressed.

On the other hand, the rank-graduation metrics of
\citet{AuricchioBernardelliGiudiciToscani2026} (RGX) provide the measurement basis to fill the gap.
Rank Graduation Accuracy (RGA)
generalises rank-based predictive accuracy;
Rank Graduation Robustness (RGR) measures the stability of predictions under
perturbation; Rank Graduation Explainability (RGE) measures how much
predictions depend on declared features.  The
metrics have been integrated into a single compliance score for model-level
comparison \citep{giudici2026integrated}, and the same line carries the
metrics into multi-agent settings in which an agent decides by comparing a
metric to a risk-appetite threshold \citep{babaei2026agentic}. While these approaches are mathematically consistent, they produce an integrated compliance score which does not take uncertainty into account. 

In this paper we aim to improve both streams of research: maintain the mathematical consistency of the SAFE-AI approach, but developing uncertainty measures around it, also taking into account the correlation between the different components.

Confidence intervals for the individual RGX  metrics may be derived from the existing literature. For example,  \cite{giudiciraffinetti-adac} introduces  
 a two-model comparison test for RGA built on dependent U-statistics with a
jackknife variance \citep{hoeffding1948, efronstein1981}, and
the Rank Graduation metrics carries that apparatus across the other metrics
\citep{babaei2025rgb}; additionally, RGA is a close relative of AUROC, whose interval
estimation can be based on  \citet{delong1988}, or on confidence
intervals for general rank statistics \citep{newson2006}.
Alternatively, 
 distribution-free
methods based on conformal
prediction can be employed. These methods wrap a model's output for a
\emph{new instance} in a set with a finite-sample guarantee
\citep{bates2021}; the quantity at issue here is a scalar summary of an
\emph{already-fixed} evaluation, computed on a sample the three metrics
share. What is missing  is the covariance between them: a conformal set for the composite metric would still
leave the cross-metric term unmeasured.

Earlier work \citep{sokolov2026composed} introduced a shared-sample covariance argument, using paired-bootstrap and jackknife estimators, two-link propagation, and an expiring calibration certificate.  In this work, we extend this machinery to the volume-integration compliance score defined in \citet{giudici2026integrated}. We show, from both a theoretical and an empirical viewpoint, that assuming independence between the different metrics can have a material effect on the reported uncertainty. 

We make three main contributions. First, we estimate the covariance of the component vector, leading to a joint confidence region for the component metrics and a confidence interval for the compliance score.

Second, we present a closed-form identity that gives the cost of assuming independence, and  show that the perturbation family can affect the sign of the covariance effect. 

Third, we apply the proposal in two settings: a simulated
run of an executed agentic verifier and the well-known German credit lending data.

Figure~\ref{fig:paper-flow} summarises the organisation of the paper. Section~\ref{sec:metrics} defines the SAFE component metrics, the composed compliance score and the independence baseline. Section~\ref{sec:joint} develops the common theory: paired estimation of the shared-sample covariance, the resulting joint confidence region, and the exact diagnostic for the cost of replacing the covariance matrix by its diagonal. Sections~\ref{sec:german} and~\ref{sec:simulated} then apply that same machinery independently to the German-credit and simulated-agent settings. Section~\ref{sec:discussion} draws the conclusions.

\begin{figure}[t]
\centering
\begin{tikzpicture}[
  x=1cm,
  y=1cm,
  flowbox/.style={
    draw=black!65,
    rounded corners=1.5pt,
    align=center,
    text width=5.0cm,
    minimum height=1.65cm,
    inner sep=4pt,
    font=\small
  },
  context/.style={flowbox,fill=black!3},
  method/.style={flowbox,fill=black!5},
  theory/.style={flowbox,fill=black!10},
  application/.style={flowbox,fill=black!6},
  flow/.style={-{Latex[length=2.1mm]},semithick,draw=black!70},
  branch/.style={semithick,draw=black!70}
]

\node[context] (intro) at (0,6.0) {
  \textbf{Introduction}\\[-1pt]
  \footnotesize Section~\ref{sec:intro}\\
  problem and contributions for\\ uncertainty-aware compliance decisions
};

\node[method] (metrics) at (0,3.8) {
  \textbf{Measurement frame}\\[-1pt]
  \footnotesize Section~\ref{sec:metrics}\\
  SAFE curves, compliance score,\\ independence baseline
};

\node[theory] (theory) at (0,1.6) {
  \textbf{Common uncertainty theory}\\[-1pt]
  \footnotesize Section~\ref{sec:joint}\\
  paired covariance, joint region,\\ independence-cost diagnostic
};

\node[application] (german) at (-2.85,-1.25) {
  \textbf{German-credit data}\\[-1pt]
  \footnotesize Section~\ref{sec:german}\\
  chain and aggregator\\ sensitivity
};

\node[application] (simulated) at (2.85,-1.25) {
  \textbf{Simulated-agent data}\\[-1pt]
  \footnotesize Section~\ref{sec:simulated}\\
  within- and cross-stage\\ sensitivity
};

\draw[flow] (intro.south) -- (metrics.north);
\draw[flow] (metrics.south) -- (theory.north);

\coordinate (fork) at (0,0.0);
\draw[branch] (theory.south) -- (fork);
\draw[flow] (fork) -| (german.north);
\draw[flow] (fork) -| (simulated.north);

\end{tikzpicture}
\caption{Paper structure and argumentative dependencies. Sections~1--2
establish the problem and measurement frame; Section~3 develops the common
uncertainty theory; and Sections~4--5 independently instantiate that theory on
German-credit and simulated-agent data. The parallel branches are applications
of the same machinery, not sequential stages. Arrows denote argumentative
dependencies, not causality.}
\label{fig:paper-flow}
\end{figure}

\section{Measurement frame}\label{sec:metrics}

\subsection{The rank-graduation family}

The SAFE AI approach (see, e.g., \cite{giudici2026integrated}) is based on a set of rank graduation metrics built from a concordance-based construction that produces values in $[0,1]$. For a trained model evaluated on a test sample, $\rga$ measures how closely the ranking induced by the predicted values matches the ranking of the observed responses, $\rge$ measures how much the predictions change when features are removed, and $\rgr$ measures how much the predictions change under input data perturbations.  

The three metrics depend on parameters that are typically fixed exogenously according to a risk appetite. To obtain model summaries that do not rely on a single choice, we calculate the areas under the three curves, similarly to what is usually done for the well-known Area Under the ROC Curve. This leads to the Area Under the Accuracy Curve, $\aurga=\int_0^1\rga(p)\,dp$, which summarises accuracy resilience, where $p$ denotes normalised severity along the accuracy curve. The Area Under the Robustness Curve is $\aurgr=\int_0^1\rgr(p)\,dp$, where $p$ indicates normalised perturbation intensity. The Area Under the Explainability Curve is $\aurge=\int_0^1\rge(p)\,dp$, where $p$ is the fraction of features removed.

In empirical implementations, the three curves are evaluated over
increasing levels of severity. The accuracy curve used in this analysis is the binary partial-contribution curve. Observations are ordered by predicted score and divided into segments, whose normalised contributions are cumulatively subtracted from the full-sample RGA. The explainability curve is constructed by greedily removing features. For the robustness curve, the primary perturbation is a tail swap. At severity $p$, observations in the lower and upper $p$-quantiles of each feature are exchanged. The resulting curves are then evaluated on a finite grid of values of the parameter $p$.

\subsection{The compliance score}

The three metrics are expressed in terms of the same underlying concordance framework and can thus be integrated in a single compliance score.
The  term ``compliance score'' follows the source literature and
denotes a statistical composite. %

To derive a compliance score for the SAFE AI metrics \cite{giudici2026integrated} work directly with the three vectors from which the curves are built. Let $t = 1,\ldots,L$ index a common normalised severity grid representing partial RGA contribution removal, feature removal, and perturbation intensity. Writing $a = (a_1, \dots, a_L)$, $e = (e_1, \dots, e_L)$ and $r = (r_1, \dots, r_L)$ for the accuracy, explainability and robustness curve values on the grid, we obtain an $L \times L \times L$ tensor by applying a mean $m$ to each triplet,
\begin{equation}
M(i,j,k) \;=\; m\!\left(a_i, e_j, r_k\right),
\qquad i, j, k \in \{1, \dots, L\}.
\label{eq:tensor}
\end{equation}
The mean can take  alternative forms; for example: the arithmetic mean
$(a_i + e_j + r_k)/3$, the geometric mean $(a_i e_j r_k)^{1/3}$, and the quadratic mean $\sqrt{(a_i^2 + e_j^2 + r_k^2)/3}$. 

A compliance score can then be derived as an equal-weight average
of the tensor values, across all three dimensions:
\begin{equation}
\mathcal V \;=\; \frac{1}{L^3} \sum_{i=1}^{L} \sum_{j=1}^{L} \sum_{k=1}^{L} M(i,j,k).
\label{eq:volume}
\end{equation}

\subsection{The independence baseline}

The integrated-metrics framework associates the three curve constructions with different operations: sampling the evaluation observations, modifying the feature set, and perturbing the inputs. In our implementation, however, all three estimated metrics depend on the same bootstrapped test rows, while $\aurgr$ additionally depends on the perturbation mechanism. 

If each metric's randomness were driven only by its own source, and the sources were independent, the covariance matrix of the estimated vector would be diagonal, and a confidence interval
for the compliance score could be assembled from the three marginal variances. However, the three sources are rarely independent. 
A diagonal covariance is thus a counterfactual baseline, not a default
conclusion; Section~\ref{sec:joint} derives the sampling covariance among the three metrics that it omits. 

The practical cost of independence will also be illustrated by means of a real data application, which will be employed as a running example. 
The application consists of the well-known German credit data, which  can be downloaded as Statlog German credit (OpenML \texttt{credit-g} v1), with 700
training and 300 test rows, $d = 20$ features, stratified split, seed
20260723. 
As machine learning models we will consider a logistic regression and a random forest. We employ $B = 2000$ paired bootstrap
replicates throughout. 

\section{Joint confidence regions}\label{sec:joint}

\subsection{Where dependence comes from}

Let $\thetahat = (\thetahat_{\rga}, \thetahat_{\rge}, \thetahat_{\rgr})$ be the vector of the estimated SAFE-AI metrics of Section~\ref{sec:metrics}, for a fixed trained model, all computed on the same test data $\Sm$, and let $\hat C = g(\thetahat)$ be the estimated compliance score for an aggregator $g$. To derive a confidence interval for $C$ we need to estimate the variance-covariance matrix of $\thetahat$, $\Sighat$. It can be easily shown that the covariance between any two distinct metrics can be decomposed into a within bootstrap sample covariance and a between sample covariance:
\begin{equation}
\operatorname{Cov}(\thetahat_k, \thetahat_l)
\;=\;
\mathbb{E}\!\left[\operatorname{Cov}(\thetahat_k, \thetahat_l \mid \Sm)\right]
\;+\;
\operatorname{Cov}\!\left(\mathbb{E}[\thetahat_k \mid \Sm],\;
\mathbb{E}[\thetahat_l \mid \Sm]\right).
\label{eq:totalcov}
\end{equation}
For a given test sample $S$, the only randomness left is from the
perturbation draw, which is generated independently of the data and of the other two metrics. The first term is thus equal to zero. The second term captures the dependence between different test samples. All three metrics covary across different draws of the test sample, and this induces a cross-metric covariance different from zero. 
Assuming that the three metrics are independent amounts to ignore this second component.

The variability that the data dependency generates differs across the three metrics. The explainability term is
deterministic given the data rows: greedy masking uses a fixed search over the prediction space. Accuracy is
also deterministic given the data. Robustness alone adds perturbation
noise to the variability carried by the data.

\subsection{Estimating the covariance}

We can estimate the covariance matrix, $\Sighat$,  in two ways. The paired bootstrap \citep{efron1979} resamples test rows once per replicate and recomputes the entire vector on the same resample, $B = 2000$ times; the empirical covariance among the $B$ replicate vectors estimates the covariance of $\thetahat$. The jackknife \citep{efronstein1981} leaves one test row out, recomputes the vector, and applies the standard leave-one-out covariance formula. In the empirical applications below, we use the paired-bootstrap covariance estimator.

The additional variability caused by the robustness perturbations may be influential. To control its effect we will compare a \emph{fixed draw} with a \emph{redraw} scheme. In the former case we
draw one perturbation realisation per model and hold it across all
replicates; every resulting quantity is conditional on that realisation.
Under the \emph{redraw} scheme, a fresh perturbation is generated for each original evaluation row in every replicate, seeded by the replicate index, after which the bootstrap indices are applied. The resulting quantities are unconditional with respect to the perturbation randomness under this implementation. 

Table \ref{tab:corr} shows the comparison between the estimated correlations under the two schemes, for the German credit data, and two different machine learning models.

\begin{table}[t]
\caption{Paired-bootstrap cross-metric correlations under the two
perturbation-conditioning schemes, $B = 2000$, German credit at $n = 300$.
The fixed-draw columns are conditional on one perturbation realisation; the
redraw columns are unconditional.}
\label{tab:corr}
\small
\centering
\begin{tabular}{lrrrr}
\toprule
 & \multicolumn{2}{c}{logistic} & \multicolumn{2}{c}{random forest} \\
\cmidrule(lr){2-3}\cmidrule(lr){4-5}
Correlation & fixed draw & redraw & fixed draw & redraw \\
\midrule
$\rga$ with $\rge$ & $+0.286$ & $+0.286$ & $+0.292$ & $+0.292$ \\
$\rga$ with $\rgr$ & $+0.160$ & $+0.127$ & $+0.153$ & $+0.139$ \\
$\rge$ with $\rgr$ & $+0.597$ & $+0.430$ & $+0.342$ & $+0.329$ \\
\bottomrule
\end{tabular}
\end{table}

From Table~\ref{tab:corr}, note that the $\rga$-with-$\rge$ correlations are identical. This holds by construction, since both functionals are independent of the perturbations. Only the pairs involving $\rgr$ vary under redrawing, as expected. The dispersion of $\rgr$ also changes under redrawing, reflecting the integration over the additional perturbation randomness.
In particular,  note that  $\rgr$ dispersion increases under redrawing. This is an empirical finding rather than a direct consequence of the redraw scheme.

The estimated covariance induces an ellipsoidal joint confidence region for the SAFE AI metrics, as follows:
\begin{equation}
\left\{\theta :
(\thetahat-\theta)^{\top}\Sighat^{-1}(\thetahat-\theta)
\leq \chi^2_{3,\,1-\alpha}\right\},
\label{eq:region}
\end{equation}
where $\chi^2_{3,\,1-\alpha}$ is the $(1-\alpha)$ quantile of the chi-squared distribution with three degrees of freedom. This is an asymptotic joint confidence region based on the estimated bootstrap covariance.

We have verified the calibration of both constructions in a known-truth simulation with nine scenario configurations, varying the test-sample size from 100 to 800 rows, the sign and strength of the latent correlations, and the perturbation level, with 1000 outer replications per scenario; the Monte Carlo standard error at the nominal 95\% level is about 0.7 percentage points. The chi-squared region of \eqref{eq:region} attains empirical coverage between 92.7\% and 95.2\%, and the paired percentile interval for the composite score between 94.1\% and 97.1\%. The same percentile interval built from independent bootstrap streams, which mimics the independence baseline, ranges between 80.2\% and 99.8\% across the same scenarios: anticonservative where the correlations are positive and wastefully wide where they are negative. This motivates the paired construction used throughout the paper.

\subsection{Compliance confidence intervals }\label{sec:theory}

To obtain a confidence interval for  the compliance score $C$, we can apply  the delta method, and obtain $\operatorname{var}(\hat C) \approx \nabla g^{\top} \Sighat\, \nabla g$, where $\nabla g$ is the gradient of $g$ evaluated at $\thetahat$, and  $\Sighat$ is approximated  via the paired-bootstrap \citep{efrontibshirani1993}.

More precisely, let $\thetahat\in\mathbb{R}^p$ be a vector of  metrics estimated on a common test sample, with positive-definite estimated covariance matrix $\Sighat$. Let $\sigma_k=\Sighat_{kk}^{1/2}>0$ denote the marginal standard error of metric $k$, let $D=\operatorname{diag}(\sigma_1,\ldots,\sigma_p)$, and let $P=D^{-1}\Sighat D^{-1}$ be the resulting correlation matrix. Let $C=g(\theta)$ be a compliance score and $\hat C=g(\thetahat)$ its plug-in estimate, where $g$ is differentiable at $\thetahat$, and set $\nabla=\nabla g(\thetahat)\neq0$.

The delta method gives the approximate variance
\begin{equation}
v=\nabla^{\top}\Sighat\nabla,
\qquad
v_0=\sum_{k=1}^{p}\nabla_k^2\Sighat_{kk},
\label{eq:twovars}
\end{equation}
where $v_0$ is obtained under the assumption of independence, that is, setting all cross-covariances to zero. 

Let $R=\sqrt{v/v_0}>0$ be the ratio of the full-covariance interval width to the diagonal-only interval width.

\begin{definition}[Exposure]
\label{def:exposure}
The exposure vector is $d=D\nabla$, with components
$d_k=\nabla_k\sigma_k$. Its normalised form is
$u=d/\lVert d\rVert_2$, so that
$u_k=\nabla_k\sigma_k/\lVert d\rVert_2$ and
$\lVert u\rVert_2=1$. Define the effective number of exposed components by
\begin{equation}
p_{\mathrm{eff}}
=
\left(\sum_{k=1}^p u_k\right)^2,
\label{eq:peff}
\end{equation}
and, whenever $\sum_{k<l}u_ku_l\neq0$, define the
\emph{exposure-weighted average correlation} by
\begin{equation}
\bar{\rho}
=
\frac{\displaystyle\sum_{k<l}u_ku_l\,\rho_{kl}}
{\displaystyle\sum_{k<l}u_ku_l},
\label{eq:rhoeff}
\end{equation}
where $\rho_{kl}$ is the $(k,l)$-th entry of $P$.
\end{definition}

Since $v_0=\sum_kd_k^2=\lVert d\rVert_2^2$, the quantity
$u_k^2=d_k^2/v_0$ is the share of the diagonal-only variance contributed by
component $k$. If $\nabla\geq0$, then $u\geq0$ and
$1\leq p_{\mathrm{eff}}\leq p$: the lower bound is attained when only one
component has non-zero exposure, and the upper bound when all exposures are
equal.

\begin{proposition}[Width ratio decomposition]\label{prop:main}
For every non-zero exposure vector,
\begin{equation}
R^2
=\frac{v}{v_0}
=u^{\top}Pu
=1+2\sum_{k<l}u_ku_l\rho_{kl}.
\label{eq:main}
\end{equation}
Whenever $\bar\rho$ is defined, this can equivalently be written as
\begin{equation}
R^2=1+\bigl(p_{\mathrm{eff}}-1\bigr)\bar\rho.
\label{eq:main-rho}
\end{equation}
\end{proposition}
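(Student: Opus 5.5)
The plan is to rewrite both variances in terms of the exposure vector $d=D\nabla$ and the correlation matrix $P$. Since $D$ is diagonal with positive entries, it is invertible, and the definition $P=D^{-1}\Sighat D^{-1}$ can be inverted to $\Sighat=DPD$. Substituting this into \eqref{eq:twovars} gives
\[
v=\nabla^{\top}DPD\nabla=d^{\top}Pd,
\qquad
v_0=\sum_k\nabla_k^2\sigma_k^2=\lVert d\rVert_2^2 .
\]
Because $\nabla\neq0$ and every $\sigma_k>0$, we have $d\neq0$. Hence $v_0>0$, $u=d/\lVert d\rVert_2$ is well defined, and dividing by $v_0$ yields $R^2=v/v_0=u^{\top}Pu$. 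For $R$ itself to be a real positive number we also need $v>0$; this follows from the positive definiteness of $\Sighat$, since $v=\nabla^{\top}\Sighat\nabla$ with $\nabla\neq0$.

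Next I will expand the quadratic form, using that $P$ is symmetric with unit diagonal ($\rho_{kk}=\Sighat_{kk}/\sigma_k^2=1$):
\[
u^{\top}Pu=\sum_k u_k^2+2\sum_{k<l}u_ku_l\rho_{kl}.
\]
The first sum equals $\lVert u\rVert_2^2=1$, which gives \eqref{eq:main}.

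For \eqref{eq:main-rho}, I expand the square in the definition of $p_{\mathrm{eff}}$:
\[
p_{\mathrm{eff}}=\Bigl(\sum_k u_k\Bigr)^2=\sum_k u_k^2+2\sum_{k<l}u_ku_l=1+2\sum_{k<l}u_ku_l,
\]
so that $2\sum_{k<l}u_ku_l=p_{\mathrm{eff}}-1$. When $\bar\rho$ is defined, its denominator is nonzero, and \eqref{eq:rhoeff} rearranges to
\[
\sum_{k<l}u_ku_l\rho_{kl}=\bar\rho\sum_{k<l}u_ku_l=\tfrac12(p_{\mathrm{eff}}-1)\bar\rho .
\]
Substituting this into \eqref{eq:main} gives \eqref{eq:main-rho}.

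There is no substantive obstacle here; the argument is pure algebra. The only points that need care are the well-definedness conditions: $d\neq0$, so that $u$ exists and $v_0>0$; the nonvanishing denominator of $\bar\rho$, which is exactly the condition $p_{\mathrm{eff}}\neq1$; and the unit diagonal of $P$, which is what makes $\sum_k u_k^2\rho_{kk}=1$.
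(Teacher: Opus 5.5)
Your proposal is correct and follows the paper's own proof: write $v=d^{\top}Pd$ and $v_0=\lVert d\rVert_2^2$, divide, and expand the quadratic form using the unit diagonal of $P$. Then use $2\sum_{k<l}u_ku_l=p_{\mathrm{eff}}-1$ to get the $\bar\rho$ form; your extra well-definedness checks ($d\neq0$, $v>0$ from positive definiteness, and $\bar\rho$ defined exactly when $p_{\mathrm{eff}}\neq1$) are a little more careful than the paper's and change nothing of substance.
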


\begin{proof}
By the definitions above,
$v=\nabla^{\top}\Sighat\nabla=d^{\top}Pd
=\lVert d\rVert_2^2u^{\top}Pu$, while
$v_0=\sum_kd_k^2=\lVert d\rVert_2^2$. Dividing and expanding the quadratic
form gives Equation~\eqref{eq:main}. Finally,
$2\sum_{k<l}u_ku_l=(\sum_k u_k)^2-\sum_k u_k^2
=p_{\mathrm{eff}}-1$, which gives Equation~\eqref{eq:main-rho} whenever
$\bar\rho$ is defined.
\end{proof}

\begin{definition}[Relative width and variance width effects]\label{cor:U}
Let
$\omega=2z_{1-\alpha/2}\sqrt v$ and
$\omega_0=2z_{1-\alpha/2}\sqrt{v_0}$
be the widths of the full-covariance and diagonal-only normal intervals,
respectively, where $z_{1-\alpha/2}$ is the corresponding standard normal
quantile. Define their signed relative width and variance effects by
\begin{equation}
U=1-\frac{\omega_0}{\omega}=1-R^{-1},
\qquad
V=1-\frac{v_0}{v}=1-R^{-2}.
\label{eq:UV}
\end{equation}
Whenever $\bar\rho$ is defined, Proposition~\ref{prop:main} gives
\begin{equation}
U=1-\bigl[1+(p_{\mathrm{eff}}-1)\bar\rho\bigr]^{-1/2},
\qquad
V=\frac{(p_{\mathrm{eff}}-1)\bar\rho}
{1+(p_{\mathrm{eff}}-1)\bar\rho}.
\label{eq:U}
\end{equation}
In every case, $U=1-\sqrt{1-V}$, or equivalently,
$V=1-(1-U)^2$.
\end{definition}

Positive values of $U$ and $V$ mean that the diagonal-only calculation
understates uncertainty; negative values mean that it overstates uncertainty.
Thus, $U$ and $V$ express the same effect on the interval-width and variance
scales, respectively.

We now specialize the width ratio decomposition in Proposition 1 to four common metric aggregators.

\begin{corollary}[Linear aggregator]\label{cor:linear}
If $g(\theta) = w^{\top}\theta$ for a fixed weight vector $w$, which includes the arithmetic mean, then $\nabla = w$ is constant. For a given true covariance matrix $\Sigma$, $\operatorname{Var}(\hat C)=w^{\top}\Sigma w$, while $v=w^{\top}\Sighat w$ is the corresponding plug-in estimate.
Moreover, $u_k = w_k\sigma_k / \lVert w \circ \sigma\rVert_2$, where $\circ$ denotes element-wise multiplication. If, in addition, the
exposures are balanced, $w_1\sigma_1 = \dots = w_p\sigma_p$, then
$p_{\mathrm{eff}} = p$ and
\begin{equation}
R^2 \;=\; 1 + (p-1)\,\bar\rho ,
\label{eq:sb}
\end{equation}
with $\bar\rho$ the unweighted mean of the $\binom{p}{2}$ correlations.
\end{corollary}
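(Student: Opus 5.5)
The argument is a direct specialisation of Proposition~\ref{prop:main}, so I would verify each clause of the corollary in the order it is stated.

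\textbf{Gradient and variance.} For $g(\theta)=w^{\top}\theta$ the partial derivatives are $\partial g/\partial\theta_k=w_k$ at every point, so $\nabla=w$ does not depend on $\thetahat$. Since $\hat C=w^{\top}\thetahat$ is exactly linear, the identity $\operatorname{Var}(w^{\top}\thetahat)=w^{\top}\operatorname{Var}(\thetahat)\,w=w^{\top}\Sigma w$ holds exactly, not just to first order. Substituting $\Sighat$ for $\Sigma$ in the delta-method expression \eqref{eq:twovars} then gives $v=w^{\top}\Sighat w$ as the plug-in estimate. The arithmetic mean is the case $w=(1/p,\dots,1/p)$.

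\textbf{Exposure vector.} From Definition~\ref{def:exposure}, $d=Dw$ has components $d_k=w_k\sigma_k$, so $d=w\circ\sigma$ and $u_k=w_k\sigma_k/\lVert w\circ\sigma\rVert_2$. The hypothesis $\nabla\neq0$ requires $w\neq0$, and $\sigma_k>0$, so $d\neq0$ and $u$ is well defined.

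\textbf{Balanced case.} Suppose $w_k\sigma_k=c$ for all $k$, with $c\neq0$ because $d\neq0$. Then every $u_k=c/(|c|\sqrt p)=\operatorname{sign}(c)/\sqrt p$, so $p_{\mathrm{eff}}=(\sum_k u_k)^2=(\pm\sqrt p)^2=p$. Each product $u_ku_l$ equals $1/p$, so the common factor cancels in \eqref{eq:rhoeff}. The denominator is $\binom p2/p\neq0$ whenever $p\geq2$, so $\bar\rho$ is defined and equals the unweighted mean of the $\binom p2$ correlations $\rho_{kl}$. Substituting into \eqref{eq:main-rho} gives \eqref{eq:sb}.

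\textbf{Potential gaps.} There is no real obstacle here. The only points needing care are these:
\begin{itemize}
\item In the balanced case, if some $w_k<0$, the common value $c$ can still be negative, but then all $u_k$ share a sign and $u_ku_l=1/p>0$ regardless.
\item The degenerate case $p=1$ has no pairs. There $R^2=1$ holds trivially from \eqref{eq:main}, and $\bar\rho$ is vacuous.
\end{itemize}
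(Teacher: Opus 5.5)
Your proposal is correct and follows the paper's own route: the corollary comes from substituting $\nabla=w$ into Definition~\ref{def:exposure} and Proposition~\ref{prop:main}, and balanced exposures make every $u_k=\pm 1/\sqrt p$, so $p_{\mathrm{eff}}=p$ and the equal weights $u_ku_l=1/p$ cancel in $\bar\rho$, leaving the unweighted mean. Your extra remarks are correct refinements that the paper leaves implicit: the variance identity $\operatorname{Var}(w^{\top}\thetahat)=w^{\top}\Sigma w$ is exact for linear $g$, the common value $c$ may be negative, and the case $p=1$ is vacuous.
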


Under balanced exposures, Equation~\eqref{eq:sb} recovers the familiar correlation-adjustment factor $1+(p-1)\bar\rho$, which also appears in the denominator of the Spearman--Brown reliability formula (see \cite{Spearman_Brown}). 

\begin{corollary}[Multiplicative aggregator]\label{cor:mult}
If $g(\theta) = \prod_k \theta_k^{b_k}$, with $\theta_k>0$ and $b_k\geq0$, which includes the geometric mean,
then $\nabla_k=b_k\hat C/\thetahat_k$ and $d_k=\hat C b_k\mathrm{CV}_k$, where $\mathrm{CV}_k=\sigma_k/\thetahat_k$ is the relative standard error (coefficient of variation) of component $k$. For equal positive exponents $u$ is the normalised coefficient-of-variation vector and 
\begin{equation}
p_{\mathrm{eff}}
=
\frac{(\sum_k\mathrm{CV}_k)^2}{\sum_k\mathrm{CV}_k^2}.
\end{equation}
\end{corollary}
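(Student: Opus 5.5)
The plan is to compute the gradient of $g(\theta)=\prod_k\theta_k^{b_k}$ directly and then substitute it into the definitions of the exposure vector and of $p_{\mathrm{eff}}$ from Definition~\ref{def:exposure}.

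\emph{Gradient.} Since $\theta_k>0$, $g$ is differentiable on the positive orthant. We can differentiate logarithmically: $\log g=\sum_k b_k\log\theta_k$, so $\partial g/\partial\theta_k=g\,b_k/\theta_k$. Evaluating at $\thetahat$ gives $\nabla_k=b_k\hat C/\thetahat_k$, with $\hat C=g(\thetahat)>0$. If $b_k=0$ the formula still holds, since that factor is constant.

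\emph{Exposure.} Multiplying by $\sigma_k$ gives $d_k=\nabla_k\sigma_k=\hat C\,b_k\,\sigma_k/\thetahat_k=\hat C\,b_k\,\mathrm{CV}_k$. Hence $d=\hat C\,(b\circ\mathrm{CV})$. For Proposition~\ref{prop:main} to apply we need $\nabla\neq0$, which holds as soon as some $b_k>0$. In that case $\lVert d\rVert_2=\hat C\lVert b\circ\mathrm{CV}\rVert_2>0$, and normalising cancels the common positive factor $\hat C$: $u_k=b_k\mathrm{CV}_k/\lVert b\circ\mathrm{CV}\rVert_2$.

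\emph{Equal exponents.} With $b_k=b>0$ for all $k$ (for instance $b=1/p$ for the geometric mean), $b$ also cancels, so $u=\mathrm{CV}/\lVert\mathrm{CV}\rVert_2$ is the normalised coefficient-of-variation vector. Substituting into \eqref{eq:peff} gives $p_{\mathrm{eff}}=(\sum_k\mathrm{CV}_k)^2/\sum_k\mathrm{CV}_k^2$. All $\mathrm{CV}_k>0$, because $\sigma_k>0$ and $\thetahat_k>0$, so $u\geq0$ and the bound $1\le p_{\mathrm{eff}}\le p$ noted before Proposition~\ref{prop:main} applies.

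There is no real obstacle. The only care needed is in the hypotheses: strict positivity of $\thetahat_k$ is what makes $\mathrm{CV}_k$ and the gradient well defined, and at least one positive exponent is needed for the normalisation to make sense.
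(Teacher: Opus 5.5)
Your proof is correct and follows the direct computation the corollary rests on; the paper gives no separate proof. You compute $\nabla_k=b_k\hat C/\thetahat_k$, multiply by $\sigma_k$, cancel the common positive factor when normalising, and substitute into the definition of $p_{\mathrm{eff}}$, while also making explicit two hypotheses the paper leaves implicit: $\hat C>0$, and at least one $b_k>0$ so that $\nabla\neq0$.
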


Note that for equal exponents Corollary~\ref{cor:mult} makes the diagnostic particularly simple for the three metrics we consider here: their three coefficients of variation determine $p_{\mathrm{eff}}$, and $p_{\mathrm{eff}}$, together with $\bar\rho$, gives $U$.

\begin{corollary}[Min-type aggregator]\label{cor:onehot}
If $\nabla = c\,\mathbf{e}_j$ for some $j$ and a constant $c \neq 0$, with $\mathbf{e}_j$ the $j$-th standard basis vector, then $d = c\sigma_j \mathbf{e}_j$,
$u = \pm \mathbf{e}_j$, $p_{\mathrm{eff}} = 1$, and
\begin{equation}
R = 1, \qquad U = V = 0
\label{eq:onehot}
\end{equation}
\end{corollary}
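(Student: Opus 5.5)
The argument will be a direct substitution of the one-hot gradient into Definition~\ref{def:exposure}, followed by an appeal to Proposition~\ref{prop:main} and Definition~\ref{cor:U}. We never need $\bar\rho$, which is in fact undefined here.

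\emph{Exposure vector.} With $\nabla=c\,\mathbf e_j$ and $D$ diagonal, we get $d=D\nabla=c\sigma_j\mathbf e_j$. Because $\sigma_j>0$ and $c\neq0$, $d$ is nonzero and $\lVert d\rVert_2=|c|\sigma_j$. Hence
\[
u=d/\lVert d\rVert_2=\operatorname{sign}(c)\,\mathbf e_j=\pm\mathbf e_j .
\]
It follows that $\sum_k u_k=\pm1$, and so $p_{\mathrm{eff}}=1$ by~\eqref{eq:peff}.

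\emph{Width ratio.} Apply the first form~\eqref{eq:main} of Proposition~\ref{prop:main}, which holds for every nonzero exposure vector and does not require $\bar\rho$. Only one coordinate of $u$ is nonzero, so every product $u_ku_l$ with $k<l$ vanishes. Therefore $R^2=1$, and since $R>0$, $R=1$.

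As a check, $u^{\top}Pu=P_{jj}=1$ because $P$ is a correlation matrix. Equivalently, directly from~\eqref{eq:twovars}, $v=c^2\Sighat_{jj}=v_0$.

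\emph{Width and variance effects.} By~\eqref{eq:UV}, $U=1-R^{-1}=0$ and $V=1-R^{-2}=0$.

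The only point needing care is that~\eqref{eq:main-rho} and~\eqref{eq:U} cannot be used, because $\sum_{k<l}u_ku_l=0$ leaves $\bar\rho$ undefined. The argument must therefore go through~\eqref{eq:main} and~\eqref{eq:UV} directly. This is consistent with~\eqref{eq:main-rho}: the factor $p_{\mathrm{eff}}-1$ is zero, so the correlation term drops out whatever value is assigned to $\bar\rho$.
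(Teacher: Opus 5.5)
Your proposal is correct and follows the paper's route. The paper gives no separate proof and treats the corollary as a direct substitution of the one-hot gradient into Proposition~\ref{prop:main}; its proof of Proposition~\ref{cor:sign} makes the same step explicit, noting that a one-hot $u$ gives $R^2=u^{\top}Pu=1$. Your observation that $\bar\rho$ is undefined here is correct, so working through~\eqref{eq:main} and~\eqref{eq:UV} rather than~\eqref{eq:main-rho} and~\eqref{eq:U} is the right care to take.
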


This includes composites that are locally determined by a unique order
statistic, such as $\min_k\theta_k$ away from ties. The result holds exactly for the two first-order variance expressions, and for every $\Sighat$, regardless of the dependence among the components.

\begin{corollary}[Worst-case  aggregator]\label{cor:bound}
Suppose that $p\geq2$ and $\nabla\geq0$, and define
$\rho^{+}=\max\{0,\max_{k<l}\rho_{kl}\}$.
Then
\begin{equation}
R\leq\sqrt{1+(p-1)\rho^{+}},
\qquad
U\leq1-\bigl(1+(p-1)\rho^{+}\bigr)^{-1/2}.
\label{eq:bound}
\end{equation}
Equality holds if every pairwise correlation equals $\rho^{+}$ and the exposures are balanced. When $\rho^{+}>0$, these conditions are also necessary. If, in addition, every entry of $P$ is non-negative, the supremum over all non-negative gradients is attained and is given by the leading eigenvalue:
\begin{equation}
\sup_{\nabla\geq0}R^2=\lambda_{\max}(P),
\qquad
\sup_{\nabla\geq0}U
=1-\lambda_{\max}(P)^{-1/2}.
\label{eq:lam}
\end{equation}
A maximising gradient is
$\nabla^{\star}=D^{-1}v_{\max}$,
where $v_{\max}$ is the non-negative unit eigenvector of $P$ associated with $\lambda_{\max}(P)$.
\end{corollary}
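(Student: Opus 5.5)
The plan is to work entirely from the identity $R^2=u^{\top}Pu=1+2\sum_{k<l}u_ku_l\rho_{kl}$ of Proposition~\ref{prop:main}. Since $\nabla\geq0$ and $\sigma_k>0$, we have $d=D\nabla\geq0$ and therefore $u\geq0$ with $\lVert u\rVert_2=1$. Every cross term $u_ku_l$ is then non-negative, so each can be bounded using $\rho_{kl}\leq\rho^{+}$. This gives
\[
R^2\leq1+\rho^{+}\cdot2\sum_{k<l}u_ku_l=1+\rho^{+}\bigl(p_{\mathrm{eff}}-1\bigr),
\]
using the identity $2\sum_{k<l}u_ku_l=(\sum_ku_k)^2-1$ from the proof of Proposition~\ref{prop:main}. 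Because $\rho^{+}\geq0$ and $p_{\mathrm{eff}}\leq p$ by Cauchy--Schwarz (noted after Definition~\ref{def:exposure}), we get $R^2\leq1+(p-1)\rho^{+}$. The bound on $U$ follows because $U=1-R^{-1}$ is increasing in $R$. The role of taking the positive part in $\rho^{+}$ is exactly to make the step $p_{\mathrm{eff}}-1\mapsto p-1$ legitimate when every correlation is negative.

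For the equality statement I will trace the two inequalities used. Sufficiency is direct: if $\rho_{kl}=\rho^{+}$ for all $k<l$ and the exposures are balanced, then $u=p^{-1/2}\mathbf 1$ and $p_{\mathrm{eff}}=p$, so both steps are equalities. For necessity when $\rho^{+}>0$, equality in the second step forces $p_{\mathrm{eff}}=p$, and Cauchy--Schwarz equality then forces $u=p^{-1/2}\mathbf 1$, i.e.\ balanced exposures. With every $u_ku_l=1/p>0$, equality in the first step forces $\rho_{kl}=\rho^{+}$ for every pair. The only subtlety is recording that $u$ has all entries strictly positive at this point, so that no correlation can escape the constraint through a zero weight.

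For the supremum part, I will reparametrise the problem. The map $\nabla\mapsto u=D\nabla/\lVert D\nabla\rVert_2$ sends the non-zero non-negative gradients onto the non-negative unit sphere, with inverse $u\mapsto D^{-1}u$ up to scale. The supremum of $R^2$ therefore equals $\max\{u^{\top}Pu:u\geq0,\lVert u\rVert_2=1\}$. This maximum is at most $\lambda_{\max}(P)$ by the Rayleigh quotient. When $P$ is entrywise non-negative, Perron--Frobenius provides a non-negative unit eigenvector $v_{\max}$ for $\lambda_{\max}(P)$, so the bound is attained at $u=v_{\max}$, i.e.\ at $\nabla^{\star}=D^{-1}v_{\max}\geq0$, which is non-zero. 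If one prefers to avoid Perron--Frobenius, the same conclusion follows directly: for any unit top eigenvector $x$, the vector $|x|$ satisfies $|x|^{\top}P|x|\geq x^{\top}Px=\lambda_{\max}$ because $P\geq0$ entrywise, so $|x|$ is itself a maximiser and hence a non-negative eigenvector. The formula for $\sup U$ again follows from monotonicity.

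I expect the main obstacle to be this attainment step, and I will use the $|x|$ argument to keep the proof self-contained. A minor related check is that $\nabla\neq0$ is preserved under the reparametrisation, which holds because $D$ is invertible.
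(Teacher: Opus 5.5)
Your proof is correct and follows the paper's argument. It uses the same chain $u^{\top}Pu\le 1+\rho^{+}(p_{\mathrm{eff}}-1)\le 1+(p-1)\rho^{+}$, via $u\ge 0$ and Cauchy--Schwarz, and the same reparametrisation onto the non-negative unit sphere combined with the $|x|$ comparison and Perron--Frobenius for the supremum. Your explicit treatment of the equality conditions fills in a part of the statement that the paper's proof leaves unargued: sufficiency, and necessity when $\rho^{+}>0$ using the strict positivity of $u=p^{-1/2}\mathbf 1$.
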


\begin{proof}
$\nabla \ge 0$ and $\sigma > 0$ give $u \ge 0$, so $u_k u_l \ge 0$ and
$u^{\top}Pu = 1 + 2\sum_{k<l}u_k u_l\rho_{kl}
\le 1 + \rho^{+}(p_{\mathrm{eff}}-1) \le 1 + (p-1)\rho^{+}$,
the last step by Cauchy--Schwarz, $(\sum_k u_k)^2 \le p\lVert u\rVert^2 = p$.
For~\eqref{eq:lam}: as the non-zero vector $\nabla$ ranges over the non-negative orthant, $u$ ranges over the non-negative part of the unit sphere. For any unit vector $x$, entrywise non-negativity of $P$ gives $x^{\top}Px\leq\lvert x\rvert^{\top}P\lvert x\rvert$. Hence, the
constrained and unconstrained maxima coincide at $\lambda_{\max}(P)$,
which is attained by a non-negative eigenvector according to the
Perron--Frobenius theorem. %
\end{proof}

Note that inverting~\eqref{eq:bound} gives a screening rule. To ensure that ignoring
the cross-terms understates the interval width by at most a fraction $\varepsilon\in(0,1)$, it is sufficient that
\begin{equation}
\max_{k<l}\rho_{kl} \;\le\; \frac{(1-\varepsilon)^{-2}-1}{p-1}.
\label{eq:screen}
\end{equation}
For $p=3$ and $\varepsilon = 0.05$ this leads to $\rho_{\max} \le 0.054$; for
$\varepsilon = 0.10$, $\rho_{\max}\le 0.117$. 

We conclude this section demonstrating what is the cost and the sign of the independence approximation.

\begin{proposition}[Cost of independence]
\label{cor:sign}
Let
\begin{equation}
\Delta_{\mathrm{cov}}
:=v-v_0
=2\sum_{k<l}\nabla_k\nabla_l\widehat{\Sigma}_{kl}.
\label{eq:delta-cov}
\end{equation}
If $\nabla\geq0$ and
$p_{\mathrm{eff}}>1$, then
\begin{equation}
\Delta_{\mathrm{cov}}
=v_0\bigl(p_{\mathrm{eff}}-1\bigr)\bar\rho,
\qquad
\operatorname{sign}(U)=\operatorname{sign}(\bar\rho).
\label{eq:sign-rho}
\end{equation}
This implies that, for a monotone compliance score exposed to at least two components, the cost of independence is: a narrower confidence interval,  when
the exposure-weighted correlation is positive;  a wider confidence interval, when it is negative. 
Note that, when $p_{\mathrm{eff}}=1$, exactly one component has
non-zero exposure, so $R=1$ and $U=0$ irrespective of the correlations. 
\end{proposition}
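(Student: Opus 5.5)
The plan is to read this off Proposition~\ref{prop:main} by multiplying through by $v_0$, and then to check carefully that $\bar\rho$ is well defined under the stated hypotheses.

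\emph{Expression for $\Delta_{\mathrm{cov}}$.} First I expand the quadratic form. Since $v=\nabla^{\top}\Sighat\nabla=\sum_k\nabla_k^2\Sighat_{kk}+2\sum_{k<l}\nabla_k\nabla_l\Sighat_{kl}$, subtracting $v_0$ from \eqref{eq:twovars} gives the displayed formula for $\Delta_{\mathrm{cov}}$ directly. Proposition~\ref{prop:main} gives $v/v_0=1+2\sum_{k<l}u_ku_l\rho_{kl}$, and $v_0=\lVert d\rVert_2^2>0$ because $\sigma_k>0$ and $\nabla\neq0$. Multiplying by $v_0$ therefore yields
\[
\Delta_{\mathrm{cov}}=2v_0\sum_{k<l}u_ku_l\rho_{kl}.
\]

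\emph{Definedness of $\bar\rho$.} The only point needing care is that $\bar\rho$ must be defined before \eqref{eq:main-rho} can be invoked. Because $\nabla\ge0$ and $\sigma>0$, all $u_k\ge0$. Hence $2\sum_{k<l}u_ku_l=p_{\mathrm{eff}}-1$, using the identity from the proof of Proposition~\ref{prop:main}. By the hypothesis $p_{\mathrm{eff}}>1$ this sum is strictly positive, so $\bar\rho$ exists. Substituting gives
\[
\Delta_{\mathrm{cov}}=v_0\,(p_{\mathrm{eff}}-1)\,\bar\rho .
\]

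\emph{Sign statement.} Since $v_0>0$ and $p_{\mathrm{eff}}-1>0$, we get $\operatorname{sign}(\Delta_{\mathrm{cov}})=\operatorname{sign}(\bar\rho)$. Next, $U=1-R^{-1}$ with $R=\sqrt{v/v_0}>0$, and $U$ is strictly increasing in $R$ with $U=0$ exactly when $R=1$. So $\operatorname{sign}(U)=\operatorname{sign}(R-1)=\operatorname{sign}(v-v_0)=\operatorname{sign}(\bar\rho)$. The positivity of $v$ follows from $\Sighat$ being positive definite.

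\emph{Interpretation and boundary case.} Positive $U$ means $\omega_0<\omega$, so the independence interval is too narrow; negative $U$ means it is too wide. For the case $p_{\mathrm{eff}}=1$, note that $\sum_k u_k^2=1$ with all $u_k\ge0$. Then $(\sum_k u_k)^2=1$ forces all cross products $u_ku_l$ to vanish, so at most one $u_k$ is non-zero. This means exactly one component has non-zero exposure, as in Corollary~\ref{cor:onehot}. Equation~\eqref{eq:main} then gives $R=1$ and hence $U=0$, whatever the $\rho_{kl}$.

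There is no real obstacle. The only subtle step is the definedness of $\bar\rho$, and that is exactly where the nonnegativity of the gradient is used.
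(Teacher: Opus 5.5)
Your proposal is correct and follows essentially the same route as the paper's proof. You expand the quadratic form, use $2\sum_{k<l}u_ku_l=p_{\mathrm{eff}}-1>0$ to make $\bar\rho$ well defined, read $\operatorname{sign}(U)$ off $\operatorname{sign}(v-v_0)$, and use $u\geq0$ to force a one-hot $u$ when $p_{\mathrm{eff}}=1$.

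One small correction to your closing remark: the identity $2\sum_{k<l}u_ku_l=p_{\mathrm{eff}}-1$ holds for every unit vector $u$. So under the explicit hypothesis $p_{\mathrm{eff}}>1$, the definedness of $\bar\rho$ does not need $\nabla\geq0$. Non-negativity is genuinely used in the $p_{\mathrm{eff}}=1$ boundary case, and to make ``exposed to at least two components'' equivalent to $p_{\mathrm{eff}}>1$.
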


\begin{proof}
Equation~\eqref{eq:delta-cov} follows by expanding
$v=\nabla^{\top}\Sighat\nabla$. The width comparison follows because the
two widths are proportional to $\sqrt v$ and $\sqrt{v_0}$, and
$U=1-\sqrt{v_0/v}$. If $\nabla\geq0$, then $u\geq0$ and
$2\sum_{k<l}u_ku_l=p_{\mathrm{eff}}-1$. When this sum is zero, $u$ is
one-hot encoded and $R^2=u^{\top}Pu=1$. When it is positive, the definition of
$\bar\rho$ gives
$\Delta_{\mathrm{cov}}=v_0(p_{\mathrm{eff}}-1)\bar\rho$, proving the
remaining claims.
\end{proof}

\subsection{A reproducible uncertainty certificate}\label{sec:cert}

A reported measurement score should be accompanied by the
conditions under which it was measured, in the same way as a  calibrated instrument should
travel with its calibration certificate. 

A certified compliance score should describe the conditions under which it was measured.
Measurement institutions address this problem with
calibration certificates which state the measurement method, the measured value, the found uncertainty, 
the validity horizon --- and the mapping patterns which transfer the result to a
compliance score essentially unchanged
\citep{jcgm100, dcc}. The certificate should be signed, dated and
machine-readable. Its subject fields should identify the model, the substrate version,
the dataset and test-sample specifications. Its measurement fields should carry the estimates of the component metrics,
the estimated covariance among them, the composed compliance score, the corresponding confidence interval, along the  method
that produced it, including the bootstrap settings, seed and perturbation family. 
A recalibrate-by date and a
signature over the canonical body make the record time-bounded and
tamper-evident. The signing-and-binding pattern can follow prior work on evidence
packages for AI-agent actions \citep{sokolov2026aep}.

As a worked example, the certificate issued for the executed two-link chain of Section~\ref{sec:simulated} records a composed score of $0.808$ with a $95\%$ interval of $[0.768, 0.847]$, propagated to first order from the paired-bootstrap link covariance ($B = 2000$; the method, seeds, perturbation family and substrate commit travel in the certificate body), with a recalibrate-by date ninety days after issue. Screened against an illustrative policy threshold of $0.75$ the whole interval clears and the action is allowed, while a threshold falling inside the interval, such as $0.80$, escalates the decision rather than allowing or denying it (Figure~\ref{fig:cert-example}). The certificate also records the zero-cross-term reading $[0.773, 0.843]$, which is $11.0\%$ narrower, so a consumer can see exactly which policy thresholds the independence shortcut would decide differently.

\begin{figure}[t]\centering
\begin{tikzpicture}[y=10mm]

  \draw[thick] (12.12mm,-0.28) -- (12.12mm,0.95);
  \node[anchor=south,font=\scriptsize,align=center] at (12.12mm,0.95)
    {policy threshold $0.75$\\interval clear: allow};

  \draw[thick,dashed] (52.50mm,-0.28) -- (52.50mm,0.95);
  \node[anchor=south,font=\scriptsize,align=center] at (52.50mm,0.95)
    {policy threshold $0.80$\\inside the interval: escalate};

  \draw[thick] (26.86mm,0.00) -- (90.38mm,0.00);
  \draw[thick] (26.86mm,-0.27) -- (26.86mm,0.27);
  \draw[thick] (90.38mm,-0.27) -- (90.38mm,0.27);
  \fill (58.62mm,0.00) circle (1.7pt);
  \node[anchor=north,font=\scriptsize] at (58.62mm,-0.32)
    {certified composed score $0.808$, 95\% interval $[0.768, 0.847]$};

  \draw (0.00mm,-1.30) -- (105.00mm,-1.30);
  \draw (4.04mm,-1.30) -- (4.04mm,-1.42);
  \node[anchor=north,font=\scriptsize] at (4.04mm,-1.42) {$0.74$};
  \draw (20.19mm,-1.30) -- (20.19mm,-1.42);
  \node[anchor=north,font=\scriptsize] at (20.19mm,-1.42) {$0.76$};
  \draw (36.35mm,-1.30) -- (36.35mm,-1.42);
  \node[anchor=north,font=\scriptsize] at (36.35mm,-1.42) {$0.78$};
  \draw (52.50mm,-1.30) -- (52.50mm,-1.42);
  \node[anchor=north,font=\scriptsize] at (52.50mm,-1.42) {$0.80$};
  \draw (68.65mm,-1.30) -- (68.65mm,-1.42);
  \node[anchor=north,font=\scriptsize] at (68.65mm,-1.42) {$0.82$};
  \draw (84.81mm,-1.30) -- (84.81mm,-1.42);
  \node[anchor=north,font=\scriptsize] at (84.81mm,-1.42) {$0.84$};
  \draw (100.96mm,-1.30) -- (100.96mm,-1.42);
  \node[anchor=north,font=\scriptsize] at (100.96mm,-1.42) {$0.86$};
  \node[anchor=north,font=\footnotesize] at (52.50mm,-1.92)
    {paired bootstrap, $B = 2000$; recalibrate by 2026-10-25};

\end{tikzpicture}
\caption{A worked example of the uncertainty certificate: the certified composed score of the executed two-link chain of Section~\ref{sec:simulated}, its $95\%$ interval, and two illustrative policy thresholds --- one the whole interval clears (allow) and one falling inside the interval (escalate). All quantities except the two thresholds are read from the certificate.}
\label{fig:cert-example}
\end{figure}

\section{Application to credit lending}
\label{sec:german}
\label{sec:chain}

We now consider a real data example: the well-known German credit lending data.

Four compliance-score aggregation rules are applied to paired bootstrap replicates of the three SAFE AI metrics: the arithmetic mean, geometric mean, root-mean-square mean, and the Technique for Order Preference by Similarity to Ideal Solution (TOPSIS) of Hwang and Yoon \cite{hwangyoon1981}. The TOPSIS variant uses prespecified ideal and anti-ideal vectors, equal weights, and Euclidean distances without data-dependent column normalisation.

For the German credit data, the explainability curve fixes the grid
resolution because the data contain $d=20$ features. We use the common grid $s_t=t/d$, $t=0,\ldots,d$, giving $L=d+1=21$ knots. For the robustness curve, this grid corresponds to tail-swap severity
$p_t=0.5t/d$. We retain both endpoints, so the grid contains one more knot than stated in the defining article. This is an explicit implementation choice of the present analysis.

Our primary analysis follows \cite{giudici2026integrated} and uses
tail-swap perturbations. We also evaluate a scaled-Gaussian sweep to examine the sensitivity of the results to the perturbation family.

Table~\ref{tab:volume} reports the Compliance Score estimates and their paired-bootstrap intervals. The interval widths range from $0.044$ to $0.075$. The corresponding delta-method widths differ from the bootstrap widths by at most $3.1\%$ across all eight model--aggregator combinations.

\begin{table}[h]
\centering
\caption{Compliance Score estimates and paired-bootstrap intervals under tail-swap robustness. The quantities $100U$ and $100V$ give the signed percentage effects of setting the cross-covariances to zero on the delta-method interval width and variance, respectively. Positive values indicate understated uncertainty and Monte Carlo standard errors are shown in parentheses.}
\label{tab:volume}
\begin{tabular}{lcrrrr}
\toprule
Aggregator & $\widehat{\mathcal V}$ & Bootstrap 95\% & Width
& $100U$ (SE) & $100V$ (SE) \\
\midrule
\multicolumn{6}{l}{\textit{Logistic regression}} \\
Arithmetic & 0.631 & $[0.603,0.661]$ & 0.058 & $-0.91$ (0.73) & $-1.84$ (1.48) \\
Geometric & 0.535 & $[0.492,0.567]$ & 0.075 & $-1.86$ (0.98) & $-3.76$ (2.00) \\
RMS & 0.699 & $[0.678,0.725]$ & 0.046 & $+1.26$ (0.54) & $+2.51$ (1.07) \\
TOPSIS & 0.380 & $[0.357,0.406]$ & 0.048 & $-1.50$ (0.63) & $-3.02$ (1.29) \\
\midrule
\multicolumn{6}{l}{\textit{Random forest}} \\
Arithmetic & 0.672 & $[0.646,0.700]$ & 0.054 & $-3.08$ (0.89) & $-6.25$ (1.84) \\
Geometric & 0.601 & $[0.567,0.632]$ & 0.066 & $-3.09$ (1.02) & $-6.27$ (2.10) \\
RMS & 0.721 & $[0.701,0.746]$ & 0.045 & $-1.50$ (0.69) & $-3.02$ (1.39) \\
TOPSIS & 0.415 & $[0.394,0.438]$ & 0.044 & $-3.11$ (0.84) & $-6.32$ (1.73) \\
\bottomrule
\end{tabular}
\end{table}

Assuming independence changes the interval width by between $-3.11\%$ and $+1.26\%$, with corresponding variance differences between $-6.32\%$ and $+2.51\%$. The approximation is mostly conservative because it increases the interval width, but not uniformly so. It produces a narrower interval only for the logistic regression RMS aggregator. In general, the effect of ignoring dependence varies with the machine learning model and the aggregation rule.
The Root Mean Square (RMS) mean results for the logistic regression model in Table~\ref{tab:volume} is the  configuration in which assuming independence produces a narrower interval. This exception is consistent with the RMS gradient, which gives more weight to knots with larger values. This  concentrates exposure at lower severity levels where the relevant correlations remain positive.

It is of interest to measure the correlations between the metrics. To this end, Table~\ref{tab:bridge} presents the obtained correlation between RGR and RGE,  while holding the bootstrap resamples fixed.

\begin{table}[h]
\centering
\caption{$\rge$--$\rgr$ correlations for alternative estimator pairs, based on the same fixed-draw bootstrap resamples.}
\label{tab:bridge}
\small
\begin{tabular}{llcc}
\toprule
\multicolumn{2}{c}{$\rge \times \rgr$ estimator pair}
& \multicolumn{2}{c}{Correlation} \\
\cmidrule(lr){1-2}\cmidrule(lr){3-4}
$\rge$ estimator & $\rgr$ estimator
& Logistic regression & Random forest \\
\midrule
Single-mask mean & Fixed Gaussian draw
& $+0.597$ & $+0.342$ \\
Single-mask mean (probability scoring) & Fixed Gaussian draw
& $+0.614$ & $+0.344$ \\
Greedy-curve mean & Fixed Gaussian draw
& $+0.425$ & $+0.288$ \\
Single-mask mean & Tail-swap curve mean
& $-0.208$ & $-0.153$ \\
Greedy-curve mean & Gaussian-sweep mean
& $+0.535$ & $+0.471$ \\
Greedy-curve mean & Tail-swap curve mean
& $-0.137$ & $-0.310$ \\
\bottomrule
\end{tabular}
\end{table}

Table~\ref{tab:bridge}  shows that 
replacing the single-mask explainability estimator with the greedy-curve mean reduces the positive correlation but does not reverse its sign under Gaussian perturbations. By contrast, combining the explainability estimator with the tail-swap curve produces a negative correlation.  Replacing tail swaps with the scaled-Gaussian sweep also increases the correlation. Thus, the perturbation family affects both the level of the compliance score and the  covariance correction.

For robustness purposes, we varied the calculation method of the compliance score.
Equation~\eqref{eq:volume} uses full-product indexing: it averages
$M(i,j,k)$ over all combinations of the three indices, so the accuracy, explainability, and robustness values in a given term need not correspond to the same severity. An alternative is matched-severity indexing, which averages only the diagonal terms:
\[
\mathcal V_{\mathrm{diag}}
=
\frac{1}{L}\sum_{t=1}^{L}m(a_t,e_t,r_t).
\]
We have computed both scores and estimate their difference with the paired bootstrap.
The results show that, for the arithmetic mean, the two indexing rules are algebraically equivalent, and their values agree within $10^{-16}$ in every replicate. For the geometric mean, the matched-severity score is higher by $0.022$--$0.025$ across the two models.  For the root mean square mean, the matched-severity score is lower by $0.008$--$0.009$, corresponding to approximately one fifth of the interval width.
This pattern is consistent with the co-monotonicity of the aligned curves. Matched-severity indexing combines high values with high values and low values with low values. In the present data, this increases the geometric aggregate and decreases the root-mean-square aggregate. We remark that the matched-severity interpretation is available only when the three curves share a common grid; if separate grids are used, equal indices do not represent equal severity.

We now consider a worst-case aggregator. All entries of the scalar-metric correlation matrices in
Table~\ref{tab:corr} are non-negative. Corollary~\ref{cor:bound} therefore gives the worst case over non-negative weightings in closed form. The maximum discarded variance share is $1-\lambda_{\max}(P)^{-1}$, and the maximum width understatement is $1-\lambda_{\max}(P)^{-1/2}$.
For the logistic regression model, $\lambda_{\max}(P)=1.5803$, giving a maximum discarded variance share of $36.72\%$ and a maximum width
understatement of $20.45\%$. For the random forest model,
$\lambda_{\max}(P)=1.5143$, giving corresponding values of $33.96\%$ and $18.74\%$. With coordinates ordered as $(\rga,\rge,\rgr)$, the maximizing weightings are proportional to $(0.019,0.873,0.107)$ and
$(0.025,0.918,0.057)$, respectively. 
The worst-case value is determined by the correlation matrix $P$, while the corresponding maximizing weightings are affected by the unequal marginal variability of the three metrics. These worst-case figures are computed under the Gaussian perturbation family, for which all pairwise correlations remain non-negative on our data; under the tail-swap family some correlations turn negative and the non-negativity premise behind the closed-form bound no longer applies. Under redrawing, the bootstrap standard errors of $\rga$, $\rge$, and $\rgr$ are $0.0294$, $0.00095$, and $0.0069$, respectively, for the logistic regression model, with the same general ordering for the random forest model. As a result, the $\rga$--$\rgr$ pair makes the largest cross-term contribution even though the $\rge$--$\rgr$ correlation is larger. For a fixed compliance score,  variance contributions depend on the exposures and covariances, not on the correlations alone.

 Consider next a min-type aggregator. As it has a unique active component, Corollary~\ref{cor:onehot} gives a one-hot gradient and hence $U=V=0$. This does not mean that the component metrics are independent. It means that the other components do not affect the score locally, so their covariances do not enter its first-order variance.
The situation changes at a tie. If two or more components attain the
minimum, the functional is directionally differentiable but not ordinarily differentiable, and there is no unique gradient to use in the delta method. Moreover, the standard nonparametric bootstrap may not be consistent at such points \citep{duembgen1993,fangsantos2019}.
For example, if we increase the perturbation scale until the robustness and accuracy metrics cross, the bootstrap interval becomes narrower than the delta interval by $27\%$ for the logistic regression model and $23\%$ for the random forest model. Because neither procedure is guaranteed to be valid at an exact tie, these differences demonstrate instability rather than calibrated uncertainty.
Thus, for a min-type composite near a tie, an operator should inspect the component intervals or use an inference method designed for directionally differentiable functionals rather than relying on the ordinary delta or bootstrap interval.

In summary, the real data analysis shows that the effect of ignoring dependence is determined jointly by the aggregation rule, the uncertainty-weighted exposures, and the correlation structure. The perturbation and aggregating rule choices can also affect the score, while min-type composites require special care near ties.

\section{Application to agentic AI}
\label{sec:simulated}
We now consider the application of our methodology to simulated data, for which we assume an agentic research pipeline.
In the context of agentic AI, each agent executes a task, and could then be evaluated independently from the others. However, as the final result depends on the composition of the tasks of different agents, the evaluation should also be composite.

For the sake of illustration, we first consider a stylised system composed of two agents. Agent A makes its decisions using a logistic regression model, whereas Agent B employs a random forest model. The overall agentic chain compliance score is therefore $h = C_A \cdot C_B$. For simplicity, each link score is the geometric mean of the three scalar metrics, and both links are evaluated on the same 300 test rows of the German credit data of the previous section. This construction provides a statistical illustration of dependence between link scores sharing an evaluation sample rather than an executed agentic interaction.

With perturbations redrawn in  2{,}000 paired-bootstrap
replicates, the link scores are $C_A = 0.913$, with 95\% interval $[0.889, 0.935]$, and $C_B = 0.906$, with interval $[0.884, 0.930]$. Their estimated correlation is $\rho(C_A,C_B)=0.712$, and the composite compliance score is:
\[
h=0.913\times0.906=0.827.
\]
Including the estimated covariance leads to a  95\% interval equal to 
$[0.787,0.867]$, with width $0.0800$. Assuming independence, that is, setting the cross-link covariance to zero, gives $[0.796,0.857]$, with width $0.0611$, understating the interval width by $23.6\%$. Under a fixed-draw scheme, the correlation is $0.722$ and the width understatement is $23.8\%$. The cost of independence is therefore stable across the two conditioning schemes.

Figure \ref{fig:chain} pictorially represents the obtained results.

\begin{figure}[H]
\centering
\begin{tikzpicture}[y=10mm]

  \fill[pattern=north east lines, pattern color=black!45]
    (11.09mm,-1.70) rectangle (20.08mm,1.45);
  \draw[dotted] (11.09mm,-1.70) -- (11.09mm,1.45);
  \draw[dotted] (20.08mm,-1.70) -- (20.08mm,1.45);

  \draw[thick] (15.27mm,-1.80) -- (15.27mm,1.62);
  \node[anchor=south,font=\footnotesize] at (15.27mm,1.62)
    {operator threshold $0.791$ (illustrative)};

  \draw[thick] (11.09mm,0.85) -- (87.41mm,0.85);
  \draw[thick] (11.09mm,0.58) -- (11.09mm,1.12);
  \draw[thick] (87.41mm,0.58) -- (87.41mm,1.12);
  \fill (49.25mm,0.85) circle (1.7pt);
  \node[anchor=east,font=\footnotesize,align=right] at (7.09mm,0.85)
    {(a) measured\\cross-link\\covariance};
  \node[anchor=south,font=\scriptsize] at (49.25mm,1.05)
    {point estimate $0.827$, width $0.0800$};
  \node[anchor=north,font=\scriptsize] at (49.25mm,0.60)
    {straddles the threshold: undecided};

  \draw[thick,dashed] (20.08mm,-1.05) -- (78.41mm,-1.05);
  \draw[thick] (20.08mm,-1.25) -- (20.08mm,-0.85);
  \draw[thick] (78.41mm,-1.25) -- (78.41mm,-0.85);
  \draw (49.25mm,-1.05) circle (1.7pt);
  \node[anchor=east,font=\footnotesize,align=right] at (7.09mm,-1.05)
    {(b) cross-term\\declared zero};
  \node[anchor=south,font=\scriptsize] at (54.25mm,-0.87)
    {width $0.0611$, understating it by $23.6$ per cent};
  \node[anchor=north,font=\scriptsize] at (49.25mm,-1.27)
    {clear of the threshold: reads as passing};

  \node[anchor=north,font=\scriptsize] at (49.25mm,-1.78)
    {any threshold in the hatched band: (a) is undecided, (b) passes};

  \draw (0.00mm,-2.30) -- (105.00mm,-2.30);
  \draw (4.77mm,-2.30) -- (4.77mm,-2.42);
  \node[anchor=north,font=\scriptsize] at (4.77mm,-2.42) {$0.78$};
  \draw (23.86mm,-2.30) -- (23.86mm,-2.42);
  \node[anchor=north,font=\scriptsize] at (23.86mm,-2.42) {$0.80$};
  \draw (42.95mm,-2.30) -- (42.95mm,-2.42);
  \node[anchor=north,font=\scriptsize] at (42.95mm,-2.42) {$0.82$};
  \draw (62.05mm,-2.30) -- (62.05mm,-2.42);
  \node[anchor=north,font=\scriptsize] at (62.05mm,-2.42) {$0.84$};
  \draw (81.14mm,-2.30) -- (81.14mm,-2.42);
  \node[anchor=north,font=\scriptsize] at (81.14mm,-2.42) {$0.86$};
  \draw (100.23mm,-2.30) -- (100.23mm,-2.42);
  \node[anchor=north,font=\scriptsize] at (100.23mm,-2.42) {$0.88$};
  \node[anchor=north,font=\footnotesize] at (52.50mm,-2.92)
    {composed compliance figure $h = C(\mathrm{logit}) \times C(\mathrm{rf})$};

\end{tikzpicture}
\caption{Compliance score 95\% intervals for the agentic score $h=C_A C_B$ when the measured cross-link covariance is included and when it is set to zero. Both intervals are built around the estimate $0.827$. The hatched region marks thresholds for which the covariance-aware interval is undecided while the zero-covariance interval gives a pass.}
\label{fig:chain}
\end{figure}

The decision of whether assigning a uncertainty certificate to the operator running an agentic chain as the above depends on the  decision threshold.
The decision threshold reflects the operator's risk appetite. Together with the confidence interval for the composite score, it produces a three-valued verdict: pass if the interval lies entirely above the threshold, fail if it lies entirely below, and undecided if it crosses the threshold.

For a threshold in the lower hatched region of
Figure~\ref{fig:chain}, the interval obtained under independence gives a pass, whereas the interval carrying the measured covariance remains undecided. Although not shown in Figure~\ref{fig:chain}, a corresponding disagreement occurs between the upper endpoints: the independence interval gives a fail while the covariance-aware interval remains undecided. Most thresholds lie outside these narrow regions and therefore produce the same verdict. Within them, however, marginal link intervals give a definitive conclusion that is not supported after the cross-link covariance is included.

This indicates that evaluating agents independently is insufficient to establish compliance; the  covariance between agentic evaluations must also be reported and included.

We now consider a different agentic setting. We assume that an agent acting on someone's behalf is checked twice, at two different
moments, by two parties that do not share state. Before the action, an
authorisation gate checks whether the agent may take an action. After
the action, another verifier checks whether the action matches what
was authorised. The two checks read overlapping evidence -- the same
request, the same mandate -- but they run at different times, and neither
sees the other's verdict. The open question is how to combine the two checks in a single
compliance statement and, in particular, whether they can be assumed independent.

To answer this question, we simulated a pipeline of 1,000 randomised episodes, including 508 clean episodes and 492 episodes carrying faults drawn from a sixteen-channel taxonomy based on the pipeline's published conformance vectors and adversarial test suites. Stage A is the pre-action authorisation gate, while Stage B is the post-action verifier. At each stage, the primary model is a standard scaler followed by logistic regression, and a 300-tree random forest is used as a sensitivity model. The models are trained on 700 episodes and evaluated on the remaining 300 using seed 20260727.

Stage A uses nine pre-decision explanatory variables covering the requested tool, amount, payee class, mandate scope, spend-cap headroom, expiry margin, and issuer status. Stage B uses nine variables covering the same request metadata, together with the evidence-package size and two nonce matches. Signature-validity and recomputed-hash indicators remain with the verifiers and are not included in the models. To calculate the RGR metric, Gaussian feature noise with featurewise standard deviation equal to $0.5$ times the corresponding evaluation-sample column standard deviation is redrawn within each of the 2,000 paired row-bootstrap replicates.

The simulator assigns about half of the response actions to the clean channel and the rest across the 16 fault channels.
To populate all four stage-A/stage-B verdicts, we let an action
denied at stage A, after producing a sealed evidence record, proceed to stage B in any case.

The corresponding logistic regression compliance scores are $C_A=0.937$, with a 95\% interval of $[0.915,0.954]$, and $C_B=0.862$, with an interval of $[0.828,0.897]$, under redrawn perturbations. The composite score is $h=0.808$, and the estimated cross-link correlation is $+0.314$. The resulting correlation-aware
95\% interval is $[0.768,0.847]$. Assuming independence gives an interval of $[0.773,0.843]$, understating the interval width by $11.0\%$.

Across the fixed-draw and redraw schemes and both machine learning models, the compliance value ranges from $0.805$ to $0.808$, the correlation from $+0.24$ to $+0.32$, and the width understatement from $8.5\%$ to $11.0\%$. This dependence is weaker than in the German-credit illustration, where the correlation is $+0.712$ and the width understatement is $23.6\%$. Nevertheless, it remains material, accounting for approximately one tenth of the interval width, and it remains invisible to any compliance certificate based only on marginal intervals.

We remark that the composition level is not where the cost of independence is largest. Within Stage A, assuming independence understates the stage interval by $15.0 \pm 0.7\%$ for logistic regression and $11.8 \pm 0.8\%$ for random forest. Within Stage B, it understates the stage interval by $12.7 \pm 0.5\%$ for logistic regression and $19.9 \pm 0.5\%$ for the random forest. Thus, the within-link costs exceed the cross-link costs in all four cases.

We now repeat the analysis at each stage using the curve-based volume
composites, instead of the scalar-metric composites considered above.

For each stage and machine learning model, we reconstruct the three curves on a common severity grid. Each stage has $d=9$ features, so we use $s_t=t/9$, $t=0,\ldots,9$, giving $L=10$ knots. These correspond to accuracy after removing $t$ top-ranked segments, explainability after greedily removing $t$ features, and robustness at tail-swap severity $p_t = 0.5\,t/9$. The greedy removal order, tail-swap quantiles, and masking column means are held fixed at their full-sample values. Consequently, the reported intervals are conditional on these choices. Table~\ref{tab:volume-exec} reports the four volume composites for each stage and model.

\begin{table}[h]
\centering
\caption{Compliance Score estimates and paired-bootstrap intervals for the simulated agentic stages under tail-swap robustness. The quantities $100U$ and $100V$ are the signed percentage effects of setting the cross-covariances to zero on the delta-method interval width and variance, respectively. Positive values indicate understated uncertainty. Monte Carlo standard errors are shown in parentheses.}
\label{tab:volume-exec}
\setlength{\tabcolsep}{5pt}
\begin{tabular}{lcrrrr}
\toprule
Aggregator & $\widehat{\mathcal V}$ & Bootstrap 95\% & Width
& $100U$ (SE) & $100V$ (SE) \\
\midrule
\multicolumn{6}{l}{\textit{Stage A, logistic regression}} \\
Arithmetic & 0.657 & $[0.626,0.672]$ & 0.046 & $-3.88$ (1.21) & $-7.92$ (2.52) \\
Geometric  & 0.573 & $[0.528,0.589]$ & 0.061 & $-1.83$ (1.05) & $-3.69$ (2.14) \\
RMS & 0.703 & $[0.680,0.716]$ & 0.036 & $-4.70$ (1.20) & $-9.62$ (2.51) \\
TOPSIS & 0.417 & $[0.390,0.431]$ & 0.042 & $-0.41$ (1.14) & $-0.82$ (2.28) \\
\midrule
\multicolumn{6}{l}{\textit{Stage A, random forest}} \\
Arithmetic & 0.652 & $[0.621,0.678]$ & 0.057 & $+1.31$ (1.09) & $+2.59$ (2.16) \\
Geometric & 0.569 & $[0.498,0.601]$ & 0.102 & $-0.38$ (0.72) & $-0.76$ (1.45) \\
RMS & 0.697 & $[0.675,0.719]$ & 0.044 & $+3.98$ (1.08) & $+7.80$ (2.07) \\
TOPSIS & 0.420 & $[0.395,0.441]$ & 0.046 & $-4.30$ (1.13) & $-8.79$ (2.36) \\
\midrule
\multicolumn{6}{l}{\textit{Stage B, logistic regression}} \\
Arithmetic & 0.586 & $[0.557,0.613]$ & 0.056 & $+1.33$ (1.03) & $+2.65$ (2.04) \\
Geometric  & 0.494 & $[0.454,0.525]$ & 0.071 & $+3.91$ (0.90) & $+7.68$ (1.72) \\
RMS & 0.642 & $[0.620,0.665]$ & 0.044 & $-1.63$ (1.16) & $-3.28$ (2.35) \\
TOPSIS & 0.353 & $[0.323,0.381]$ & 0.058 & $+5.42$ (0.91) & $+10.55$ (1.72) \\
\midrule
\multicolumn{6}{l}{\textit{Stage B, random forest}} \\
Arithmetic & 0.632 & $[0.604,0.654]$ & 0.050 & $+4.77$ (1.11) & $+9.31$ (2.11) \\
Geometric & 0.554 & $[0.517,0.577]$ & 0.060 & $+3.50$ (0.94) & $+6.87$ (1.82) \\
RMS & 0.670 & $[0.646,0.691]$ & 0.045 & $+5.14$ (1.16) & $+10.02$ (2.19) \\
TOPSIS & 0.418 & $[0.393,0.438]$ & 0.045 & $+0.35$ (1.13) & $+0.70$ (2.25) \\
\bottomrule
\end{tabular}
\end{table}

Table~\ref{tab:volume-exec} shows that, under tail-swap perturbations, the signed effect of assuming independence ranges from
$-4.7\%$ to $+5.4\%$ for the confidence width and from $-9.6\%$ to $+10.6\%$ for the variance. Thus, depending on the stage, model, and aggregation rule, assuming independence may either overstate or understate uncertainty.

The perturbation family also has a substantial effect on both the composite value and the sign of the independence effect. Replacing tail-swap perturbations with the scaled-Gaussian sweep, evaluated using a fixed draw at each severity level, increases every composite by between $1.1$ and $3.4$ interval widths and makes the width effect uniformly positive, ranging from $+2.5\%$ to $+21.8\%$. The variance effect reaches $38.8\%$ for the Stage B random-forest RMS composite. This change is also reflected in the within-stage correlations. For example, for the Stage A logistic regression, the correlation between the accuracy and robustness curve summaries changes from $-0.02$ under tail swaps to $+0.38$ under Gaussian perturbations.

In summary, the simulated agentic application shows that a reproducible uncertainty certificate should record the fitted model, aggregation rule, covariance structure, and perturbation family.

\section{Conclusions}\label{sec:discussion}

The growing use of Artificial Intelligence requires the development of compliance scores that can support assessments of trustworthiness. These scores should be embedded in reproducible uncertainty certificates that report both the compliance measurement and the conditions under which it was obtained.

In this paper, we have shown how to quantify uncertainty in a compliance score for AI applications, including agentic AI systems, leveraging the SAFE AI rank graduation metrics and the corresponding composite score. Our main contribution is to provide confidence intervals for the composite score and to quantify the effect of assuming independence across different metric components and different stages of a compliance decision. 

We applied our proposal in two different contexts: high-risk AI systems designed to produce credit scores in support of credit-lending decisions, and an agentic AI system designed to make sequential decisions along a predefined pipeline.

Both applications show that a uncertainty certificate should include, besides the compliance score, the correlations among the component metrics or across the different stages. The effect of assuming independence depends on the importance of each component and on the exposure-weighted correlations. It also varies across machine learning models and aggregation rules. A uncertainty certificate should also include the perturbation family employed to assess robustness.

In a nutshell, we have contributed to the development of a statistically supported uncertainty framework for AI compliance certification. Such a framework could be useful for different types of stakeholders, including providers and deployers of AI systems, regulators, and users, in assessing the trustworthiness and reliability of an AI system before it is used.

Future work will consider extending the proposal to full agentic workflows, extending  the SAFE-by-design approach of \citep{babaei2026agentic} from narrow AI to agentic AI.

\bibliographystyle{plainnat}
\bibliography{references}

\section*{Data statement and reproducibility}

The employed data and code are fully reproducible, and are available at:

\url{https://github.com/safe-composed-uncertainty/composed-uncertainty}.
 The original German credit data is available at 
OpenML \texttt{credit-g} v1.

Our considered application runs in 15 minutes on one
workstation, of which 91\% is a conditioning check; the aggregation family
itself costs less than one second once the replicates exist.

\end{document}